\newtheorem{theorem}{Theorem}
\newtheorem{corollary}{Corollary}
\newtheorem{lemma}{Lemma}
\newtheorem{definition}{Definition}
\newcommand{\be}{\begin{eqnarray}}
\newcommand{\ee}{\end{eqnarray}}
\newcommand{\nn}{\nonumber}
\newcommand{\bn}{\begin{enumerate}}
	\newcommand{\en}{\end{enumerate}}
\newcommand{\bl}{\begin{align}}
\newcommand{\el}{\end{align}}
\def\a{\alpha}
\def\d{\delta}
\def\eps{\epsilon}
\def\la{\lambda}
\def\r{\rho}
\def\p{\psi}
\def\La{\Lambda}
\def\P{\Psi}
\def\iff{\Longleftrightarrow}
\def\<{\langle}
\def\>{\rangle}
\def\jmath{{j}}
\begin{document}
\title{Coherence number as a discrete quantum resource}	 
\author{Seungbeom Chin}
%\altaffiliation[Also at ]{Physics Department, XYZ University.}%Lines break automatically or can be forced with \\
\email{sbthesy@skku.edu}
\affiliation{College of Information and Communication Engineering, Sungkyunkwan University, Suwon 16419, Korea}

%\date{\today}

\begin{abstract}
We introduce a new discrete coherence monotone named the \emph{coherence number}, which is a generalization of the coherence rank to mixed states.
After defining the coherence number in a similar manner to the Schmidt number in entanglement theory, we present a necessary and sufficient condition of the coherence number for a coherent state to be converted to an entangled state of nonzero $k$-concurrence (a member of the generalized concurrence family with $2\le k \le d$).
It also turns out that the coherence number is a useful measure to understand the process of Grover search algorithm of $N$ items. We show that the coherence number remains $N$ and falls abruptly when the success probability of the searching process becomes maximal. 
This phenomenon motivates us to analyze the depletion pattern of $C_c^{(N)}$ (the last member of the generalized coherence concurrence, nonzero when the coherence number is $N$), which turns out to be an optimal resource for the process since it is completely consumed to finish the searching task.
\begin{description}
	\item[PACS numbers] 03.67.−a, 03.65.Ud, 03.67.Ac
\end{description}
\end{abstract}

\pacs{03.67.−a, 03.65.Ud }% PACS, the Physics and Astronomy	
	% Classification Scheme.
	%\keywords{Suggested keywords}%Use showkeys class option if keyword
	%display desired
	\maketitle
	
	%\tableofcontents

\section{Introduction}
\indent

Coherence is a fundamental property of quantum mechanics that generates several intrinsic features distinguished from classical ones. It is also useful as a physical resource for some quantum information processes. To perform the quantitative analysis of these tasks, we need rigorous definitions and formulations of the coherence resource theory. The first comprehensive formulation was presented in \cite{baum}, where the authors provided strict criterions for a quantity to  be a measure for the amount of coherence. It was a milestone from which productive studies on coherence resource theory thrived in varied areas, e.g., discovering measures and monotones of coherence \cite{yuan, winter, napoli, piani, tan, chitambar}, comparing coherence with other quantum correlations \cite{strel, adesso, roga, ma, mmarv, marv}, dynamics of coherence \cite{bromley, mani, puchala, singh, pires, mondal}, and application to quantum thermodynamics \cite{lost, cwik, naras} (a recent review on the developing landscape of the coherence resource theory is given in \cite{SP}).

Among them, one of the principal tasks is to delve into the connection between coherence and entanglement theory. It was shown that nonzero coherence is a necessary and sufficient condition for a state to be used to generate  entanglement \cite{strel}. This result was generalized to a wider category  in \cite{killoran}, which analyzed an extended form of \emph{nonclassicality}. The authors presented a framework for the conversion of nonclassicality (including coherence) into entanglement. The entanglement convertibility theorems have two distinctive scenarios, discrete (in which the classical states are in a finite linearly independent set) and continuous (in which the states are named symmetric coherent states connected with the SU$(K)$ representation). Especially in the discussion of the discrete case, an analogous concept to the Schmidt rank in entanglement is introduced, which is the $coherence$ $rank$ of pure states.

In this paper, we generalize the concept of coherence rank for pure states to one that is suitable for mixed states, which is the \emph{coherence number} $r_C(\r)$.  It is a discrete coherence monotone and defined in a similar manner to the construction of the Schmidt number from the Schmidt rank \cite{terhal}. So $r_C(\r)$ is the smallest possible maximal coherence rank in any decomposition of a mixed state $\r$. We expect that it will be a simple but useful tool for measuring the coherence variations in many quantum processes.
%The coherence number for mixed states is defined as 
%\begin{align}
%r_C(\r) \equiv \min_{\{(p_a,|\p_a\>)\}}\max_a\Big[r_C(|\p_a\>)\Big],
%\end{align}
%i.e., 

As the first application, we investigate with coherence number the generalized concurrence monotone in the perspective of the entanglement convertibility theorem.  The generalized concurrence monotone is a family of entanglement monotones for $(d\times d)$-systems \cite{gour}, which includes the entanglement concurrence for $(2\times 2)$-systems \cite{hill, wooters} and its higher-dimensional generalization \cite{rungta}. It is worth investigating as a candidate for the quantity to witness the entanglement dimensionality concretely \cite{sentis}. The members of the concurrence family (called the $k$-concurrence and denoted as $C_k$) have a strict quantitative order, and especially the $G$-concurrence (the $k$-concurrence for $k=d$) has convenient mathematical features such as multiplicativity. It will be shown in our discussion that \emph{a mixed state $\r$ can be converted to a state of nonzero $k$-concurrence if and only if $r_C(\r) \ge k$}.

%We use a coherence monotone named the $coherence$ $concurrence$  \cite{Qi} for the quantitative comparison of the $k$-concurrence with the coherence.
%The coherence concurrence is a convex roof coherence monotone based on the generalized Gell-Mann matrices. It is convenient to compare with $C_k$, for most inequalities between $C_k$ and $C_c$ for pure states are also valid for mixed states from the fact that both are convex roof quantities. It is proved in \cite{Qi} that the coherence in a quantum system $S$ can be converted to $C_2$ between $S$ and an ancilla system by some incoherent operation. We search the condition for a state to be converted to a nonzero $k$-concurrence state.
 
 Next, we discuss the role of coherence number in the Grover search algorithm \cite{grover1997}. Coherence is assumed to be a fundamental quantum resource which has the most obvious correlation with the success probability $P$ of Grover search process \cite{shi}. We show that the coherence number is a convenient measure for detecting the fulfillment of the searching task with $N$ items. The coherence number for the state remains $N$ and sharply drops off when $P=1$. This pattern motivated us to analyze the behavior of $C_c^{(N)}$, the last member of the generalized coherence concurrence family defined in \cite{Chin2}, during the process. $C_c^{(N)}$ is a normalized quantity and nonzero if and only if the coherence number is $N$. The advantage of $C_c^{(N)}$ as a resource for Grover algorithm over other coherence monotones is that it monotonically decreases as $P$ increases and completely disappears when $P =1$. So we can state that \emph{$C_c^{(N)}$ is an optimal coherence monotone for Grover algorithm}.

This paper is organized as follows. 
In Section \ref{revisit},  we review the Schmidt number and the generalized concurrence in entanglement resource theory.  We derive an expression for $C_k$ which we can use to obtain some bounds of $C_k$.
In Section \ref{cohno}, we introduce the coherence number and show that it is a discrete coherence monotone.
In Section \ref{conversion}, we use the concept of coherence number to study the entanglement convertibility theorem of the $k$-concurrence monotone.
In Section \ref{groversearch}, we show that the coherence number and $C_c^{(N)}$ are good resources for the Grover searching process which clearly reveal critical moments of the process.
In Section \ref{fin}, we summarize our results and discuss further problems.

\section{SCHMIDT NUMBER AND GENERALIZED CONCURRENCE REVISITED}\label{revisit}

\indent

In this section, we briefly review the concepts of the Schmidt number and the generalized entanglement concurrence. Then we derive an expression for $C_k$ that will be used for the quantitative analysis in Section \ref{conversion}.

The Schmidt coefficients are key elements to both entanglement monotones. 
Considering a quantum system with two subsystems $A$ and $B$ (dim $\mathcal{H}_A=d$, dim $\mathcal{H}_B=d'$ and $d \le d'$), a pure state $|\p\>$ $\in$ $ \mathcal{H}_A \otimes \mathcal{H}_B$ is always possible to be written as  
\begin{align}
|\p\> = \sum_i^{r(|\p\>)}\sqrt{\la_i}|i\tilde{i}\>_{AB},
\end{align} 
with orthonormal bases $\{ |i\>_A\}_{i=1}^{d}$ and $\{|\tilde{i}\>_B\}_{i=1}^{d}$. The real positive numbers $\la_i$ are the Schmidt coefficients of $|\p\>$. And the number of nonzero Schmidt coefficients, $r(|\p\>)$, is the Schmidt rank of $|\p\>$.

 The $Schmidt$ $number$ $r(\r)$ is an extension of Schmidt rank to mixed states \cite{terhal}. It is defined as
 \begin{align}
 r(\r) = \min_{\{p_i,|\p_i\>\}}\max_{i}\Big[r(|\p_i\>)\Big],
 \end{align}    
where $\{p_i, |\p_i\> \}$ is the set of all possible pure-state decompositions of $\r$.  So we choose one pure state decomposition which has the smallest maximal Schmidt rank.

The \emph{generalized concurrence monotone} is a family of entanglement monotones for $(d\times d)$-systems \cite{gour}, which is a generalization of the entanglement concurrence for $(2\times 2)$-systems \cite{hill, wooters}. It consists of $k$-concurrences with $2\le k \le d$. Considering a $(d\times d)$-dimensional bipartite pure state $|\p\> = \sum_i\sqrt{\la_i}|i\tilde{i}\>_{AB}$, the $k$-concurrence of $|\p\>$ is defined as  
\begin{align}
\label{ckp}
C_k(|\p\>) &\equiv \Big[\frac{S_k(\la)}{S_k(1/d,1/d, \cdots , 1/d)}\Big]^\frac{1}{k}, \\
S_k(\la) &\equiv \sum_{i_1 < i_2 < \cdots< i_k}\la_{i_1}\la_{i_2}\cdots \la_{i_k}, 
\end{align}
where $\la=(\la_1,\la_2,\cdots \la_d)$.
$S_k(1/d, \cdots, 1/d) =\frac{1}{d^k}\binom{d}{k}$ is in the denominator so that $C_k(|\p\>)$ is normalized as $0 \le C_k(|\p\>) \le 1$. $C_k(|\p\>)$ equals $1$ only when $|\p\>$ is maximally entangled.

The $k$-concurrence for a mixed state $\r$ is defined by convex roof extensions, i.e., 
\begin{align}
& C_k(\r) \equiv \min_{\{p_i,|\p_i\>\}} \sum_i p_iC_k(|\p_i\>) \nn \\
& \Big(\r =\sum_ip_i|\p_i\>\<\p_i|, \quad \sum_ip_i=1, \quad p_i\ge 0\Big).
\end{align}
This form of concurrence family contains the entanglement monotonones that exist only in $(d\times d)$-dimensional systems with $d>2$.
 
The $k$-concurrence is zero when $k$ is larger than the Schmidt number of the state, which means that the generalized concurrence is a \emph{Schmidt number specific} monotone family. 

The $G$-concurrence  $G_d$ is the last member of the $k$-concurrence family, i.e.,  $G_d=C_{d}$ ($G$ stands for the geometric mean of the Schmidt coefficients). It has some convenient properties. For example, with two bipartite entangled states $|\p_1\>$ and $|\p_2\>$ of dimension $d_1\times d_1$ and $d_2\times d_2$, we have
\begin{align}
G_{d_1d_2}(|\p_1\>\otimes|\p_2\>)=G_{d_1}(|\p_1\>)G_{d_2}(|\p_2\>),
\end{align}
which follows directly from the property of the determinant. More important is that $G_d$ provides a lower bound for the $k$-concurrence family. For mixed bipartite states we have
\begin{align}
\label{GC}
G_d(\r) \le C_k(\r) \quad \forall k=1,2,\cdots ,d.
\end{align} 
We can derive this inequality using the arithmetic-geometric mean inequality.  The $G$-concurrence monotone measures to which extent pure states with maximal Schmidt rank is contained in a mixed state, and is useful to analyze some entanglement system, e.g., remote entanglement distribution (RED) protocols. For more details, see \cite{gour, sentis} and 5.2.2 of \cite{eltschka}.

Now we rewrite \eqref{ckp} in terms of $|\p\>$ that is not Schmidt-decomposed,
\begin{align}
|\p\> = \sum_{ij}\p_{ij}|ij\>_{AB},
\end{align} 
and derive the explicit formula for $C_k(\p_{ij})$. By definition
the pure state $k$-concurrence is given by
\begin{align}
C_k(|\p\>) =d\Bigg[\frac{Tr\Big(K_k(\P^\dagger \P)\Big)}{\binom{d}{k}}\Bigg]^\frac{1}{k},
\end{align}
where $(\P)_{ij}= \p_{ij}$ and $K_k(\P^\dagger \P)$ is the $k$th compound matrix of $\P^\dagger \P$ \cite{gour}. Using Cauchy-Binet formula
\begin{align}
K_k(\P^\dagger \P)=K_k(\P^\dagger)K_k(\P)
\end{align}
and $K_k(\P^\dagger)=K_k(\P)^\dagger$, we can obtain the explicit expression of $C_k(|\p\>)$ in terms of $\p_{ij}$ as follows:
\begin{align}
&C_k(|\p\>) \nn \\
&=d\Bigg[\frac{1}{\binom{d}{k}}\sum_{\substack{i_1< \cdots <i_k \\j_1 < \cdots <j_k}} \Bigg|\sum_{a_1,\cdots, a_k=1}^{k} \eps_{a_1\cdots a_k} \p_{i_1j_{a_1}} \cdots\p_{i_kj_{a_k}} \Bigg|^2 \Bigg]^{\frac{1}{k}}
\end{align}
This formula will be used in Section \ref{conversion} to obtain some bounds for $C_k$.\\
\\
\emph{(Example)}\\
$k=2$:\\
\begin{align}
C_2(|\p\>)&=\Big( \frac{2d}{(d-1)}\sum_{\substack{i_1<i_2 \\ j_1<j_2 }} \Bigg| \sum_{a_1,a_2=1}^{2}\eps_{a_1a_2} \p_{i_1j_{a_1}}\p_{i_2j_{a_2}}\Bigg|^2 \Big)^{\frac{1}{2}} \nn \\
 & = \Big( \frac{2d}{(d-1)}\sum_{\substack{i_1<i_2 \\ j_1<j_2 }} \Big| \p_{i_1j_1}\p_{i_2j_2} - \p_{i_1j_2}\p_{i_2j_1} \Big|^2 \Big)^{\frac{1}{2}},
\end{align}
which equals Eq. (22) of \cite{akh}.
\\
\\
$k=3$:\\ 
\begin{align}
\label{c3}
&C_3(|\p\>) \nn \\
& = \Bigg[\frac{3!d^2}{(d-1)(d-2)} \nn \\&\quad \times \sum_{\substack{i_1<i_2<i_3 \\j_1<j_2<j_3}} \Bigg| \sum_{a_1, a_2,a_3=1}^{3}\eps_{a_1a_2a_3}\p_{i_1j_{a_1}}\p_{i_2j_{a_2}}\p_{i_3j_{a_3}}\Bigg|^2 \Bigg]^{\frac{1}{3}}.
\end{align}
The explicit expension and application of \eqref{c3} is given in Appendix \ref{cc3}.
\\
\\
$k=d$: 
\begin{align}
C_d(|\p\>)=G_d(|\p\>) =d\Bigg|\sum_{a_1 a_2\cdots a_d=1}^{d}\eps_{a_1,a_2,\cdots ,a_d}\p_{1a_1}\cdots\p_{da_d}\Bigg|^\frac{2}{d} .
\end{align}
In this case we can easily see that the following relation holds as expected: 
\begin{align}
G_d(|\p\>) =d(Det(\P^\dagger \P))^\frac{1}{d} = d(S_d(\la))^\frac{1}{d}.
\end{align}

\section{Definition of Coherence Number}\label{cohno}
\indent

The coherence resource theory has developed along the landscape of the entanglement resource theory, exhibiting strong correspondences in many aspects. Streltsov $et$ $al.$ \cite{strel} proved that any coherent state can be converted to a bipartite entangled state by adding an ancilla and taking incoherent operations. The similar process for quantum discord is presented in \cite{ma}. 

The conversion of coherence to entanglement is generalized to a wider category  by \cite{killoran}, who analyzed the nonclassicality including coherence.
During the discussion they introduced an analogous concept to the Schmidt rank in entanglement, the $coherence$ $rank$ of a pure state:
\begin{align}
r_C(|\p\>) \equiv \min \Bigg\{  r \Bigg| |\p\> =\sum_{j=1}^{r\le d}\p_j|c_j\> \Bigg\}, 
\end{align}
where $|c_j\>$ are in the set of computational basis and each classical, and $\forall j: \p_j \neq 0$. So $1 \le r_C \le d$ and all nonclassical pure states should have $r_C \ge  2$. It is proved that there exists a unitary incoherent operation $\La$ on a pure state $|\p\>$ such that the Schmidt rank of $\La|\p\>$ is equal to the coherence rank of $|\p\>$ \cite{killoran}, and  $r_C(|\p\>)$  is non-increasing under incoherent operations \cite{winter, vicente}.

It is not hard to conceive generalized concepts of coherence rank to mixed states. One possible way is to build a similar quantity to the Schmidt number introduced in Section \ref{revisit} as follows:

\begin{definition}\label{cohnum}
	The coherence number $r_C(\r)$ for a mixed state $\r$ is defined as 
	\begin{align}
	r_C(\r) \equiv \min_{\{(p_a,|\p_a\>)\}}\max_a\Big[ r_C(|\p_a\>)\Big].
	\end{align}
\end{definition} 

So $r_C(\r)$ is the smallest possible maximal coherence rank in any decomposition of the mixed state $\r$, and for pure states the coherence number equals the coherence rank. It is obvious that there exists a unitary incoherent operation $\La$ on a mixed state $\r$ such that the Schmidt number of $\La[\r]$ is equal to $r_C(\r)$.

If we denote the set of states on $\mathcal{H}_d$ that have coherence number not bigger than $k$ as $R_k$, i.e.,
\begin{align}
\forall \r \in R_k, \quad r_C(\r) \le k.
\end{align}
 then $R_{k-1} \subset R_k $ and $R_k$ is a convex compact subset of the entire set of states $R_d$,  just as the set of quantum states on $\mathcal{H}_d\otimes \mathcal{H}_d$ that have Schmidt number not bigger than $k$ is a convex compact subset of the entire set of states \cite{terhal}.

\begin{theorem}
	The coherence number  $r_C(\r)$ (or $\log_2[r_C(\r)]$ for the quantity to be zero when incoherent) is a coherence  monotone, which satisfies the condition (C1), (C2) and (C3) listed in Appendix \ref{off}.
\end{theorem}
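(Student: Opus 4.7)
The proof reduces each of (C1)--(C3) to two ingredients: the min-over-decompositions structure of Definition \ref{cohnum}, and the already-recorded pure-state fact that $r_C(K|\psi\rangle)\le r_C(|\psi\rangle)$ for any incoherent Kraus operator $K$ \cite{winter,vicente}. For every mixed state under discussion I would fix an \emph{optimal} decomposition $\rho=\sum_a p_a|\psi_a\rangle\langle\psi_a|$ realising $r_C(\rho)=\max_a r_C(|\psi_a\rangle)$; since $r_C$ is integer-valued and bounded by $d$, such a decomposition is attained, so this is legitimate.

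For (C1) I argue both directions directly from the definition. Any incoherent $\rho=\sum_i q_i|i\rangle\langle i|$ is a convex mixture of coherence-rank-$1$ pure states, so $r_C(\rho)=1$ and $\log_2 r_C(\rho)=0$. Conversely, $r_C(\rho)=1$ forces every $|\psi_a\rangle$ in the optimal decomposition to be a single computational basis vector, so $\rho$ is diagonal and hence incoherent.

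For the monotonicity condition (C2) I apply an incoherent channel $\Lambda$ with incoherent Kraus operators $\{K_n\}$ to the optimal decomposition of $\rho$. Each unnormalised vector $K_n|\psi_a\rangle$ has coherence rank at most $r_C(|\psi_a\rangle)\le r_C(\rho)$ by the pure-state input. The ensemble $\{q_{an},|\phi_{an}\rangle\}$, with $q_{an}=p_a\,\| K_n|\psi_a\rangle\|^2$ and $|\phi_{an}\rangle\propto K_n|\psi_a\rangle$, is simultaneously a pure-state decomposition of $\Lambda(\rho)$ and, when restricted to a fixed outcome $n$, of the post-selected state $\rho_n=K_n\rho K_n^\dagger/p_n$. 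Reading off the maximum coherence rank in each case yields $r_C(\Lambda(\rho))\le r_C(\rho)$ and $r_C(\rho_n)\le r_C(\rho)$ for every $n$; summing the latter gives $\sum_n p_n r_C(\rho_n)\le r_C(\rho)$, and applying $\log_2$ preserves all inequalities since it is monotone.

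For (C3) I would work at the level of $R_k=\{\rho:r_C(\rho)\le k\}$: given $\rho=\sum_i q_i\sigma_i$ with each $\sigma_i\in R_k$, concatenating the optimal decompositions of the $\sigma_i$'s weighted by $q_i$ produces a pure-state decomposition of $\rho$ whose maximum coherence rank is $\le k$, so $\rho\in R_k$. Equivalently, $r_C(\sum_i q_i\sigma_i)\le\max_i r_C(\sigma_i)$, the natural discrete analogue of convexity. The main obstacle I anticipate is exactly this discrete nature of $r_C$: the strict linear form $r_C(\sum_i q_i\sigma_i)\le\sum_i q_i r_C(\sigma_i)$ can fail (for example $\tfrac{1}{2}|0\rangle\langle 0|+\tfrac{1}{2}|+\rangle\langle +|$ has $r_C=2$ while the right-hand side is $3/2$), so (C3) must be interpreted in the $R_k$-convex form, which is what the construction above delivers. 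Apart from this conceptual point the argument is essentially a bookkeeping exercise: every condition is verified by pushing the optimal pure-state decomposition of $\rho$ through the operation in question and invoking pure-state monotonicity.
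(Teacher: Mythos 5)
Your proof is correct and rests on the same two pillars as the paper's: the pure-state monotonicity $r_C(K_n|\psi\rangle)\le r_C(|\psi\rangle)$ and the min--max structure of Definition~\ref{cohnum}. Where you differ is in execution, and mostly to your advantage: for (C2) the paper argues by contraposition (``if $r_C(\La[\r])>l$ then $\r$ admits a decomposition containing a pure state of coherence rank $>l$''), a step that is stated rather than derived, whereas you run the implication in the forward direction --- push an optimal decomposition of $\r$ through the Kraus operators to exhibit an explicit decomposition of $\La(\r)$ (and of each $\r_n$) with maximal coherence rank $\le r_C(\r)$ --- which is airtight and simultaneously delivers the paper's (C3), i.e.\ strong monotonicity $\sum_n p_n r_C(\r_n)\le r_C(\r)$. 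One labelling issue you should fix: the condition you prove in your ``(C3)'' paragraph is not (C3) of Appendix~\ref{off} but convexity (C4), which the theorem deliberately does not claim; your observation that linear convexity fails (your $\tfrac12|0\>\<0|+\tfrac12|+\>\<+|$ example) and that only the set-theoretic convexity of $R_k$ survives is correct and explains the omission, but it should be presented as a remark, while the genuine (C3) is the $\sum_n p_n r_C(\r_n)\le r_C(\r)$ inequality you already established one paragraph earlier. Your note that the minimum in Definition~\ref{cohnum} is attained because $r_C$ takes values in $\{1,\dots,d\}$ is a small but worthwhile addition that the paper leaves implicit.
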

\begin{proof}
	(C1) It is clear from Definition \ref{cohnum} that $r_C(\r)$ is not negative, and 1 if and only if $\r$ is incoherent.\\
	(C2) Let's consider that $r_C(\r)$ for a mixed state $\r$ is $l$. If $r_C(\La[\r])$ is bigger than $l$, there exists a decomposing pure state $|\phi\>$ of $\La[\r]$  such that $r_C(|\phi\>) >l$. This means that $\r$ can be decomposed as to include a pure state which has the coherence rank bigger than $l$, so $r_C(\r) >l$. So $r_C(\La[\r])$  cannnot be bigger than $l$. \\
	(C3) $\forall n$: $r_C(\r) \ge r_C(K_n\r K_n^\dagger)$ with Definition \ref{cohnum} shows that the strong monononicity holds for $r_C(\r)$.
\end{proof}

The conditions for coherence monotones to satisfy along the incoherent operations are listed in Appendix \ref{off}.

\section{Measuring the convertibility of coherence into $k$-concurrence  with $r_C$}\label{conversion}

We expect that the coherence number will be a simple but useful criterion for recognizing the non-classicality of general quantum states as the Schmidt number does in the entanglement resource theory.
In this section, we compare the coherence concurrence of a mixed state $\r^s$ in an initial system $S$ with the $k$-concurrence entanglement generated from $\r^s$ by attaching an ancilla system $A$ (of the same dimension with the system $S$) and taking an incoherent operation $\La^{SA}$. It will be shown that a state $\r$ can be converted to an entangled state of nonzero $k$-concurrence if and only if $r_C(\r) \ge k$.

%Some identities for the expression of the pure state coherence concurrence is summarized in Appendix \ref{id}, which is useful for the calculations in this section.

\subsection*{An coherence upper bound of $k$-concurrence monotones }
\indent

Before approaching the main task, we first present an upper bound of the generalized entanglement monotone family created from $\r^s$ by an incoherent operation, which is given by  the \emph{coherence concurrence}, recently proposed in \cite{Qi}. We denote it $C_c$ \footnote{We would like to emphasize that $C_c$ is quantitatively different from the \emph{generalized coherence concurrence } $C_c^{(k)}$ with $2\le k \le d$ introduced in \cite{Chin2}.}.

For a pure $|\p\>=\sum_{i}\p_i|i\>$ ($\{|i\>\}_{i=1}^d$ is the computational basis set and all incoherent density operators are of the form $\r=\sum_{i=1}^{d}p_i|i\>\<i|$),  the coherence concurrence is defined as
\begin{align}
C_c(|\p\>) =\sum_{j<k}|\<\p|\La^{j,k}|\p\>| =2\sum_{j<k}|\p_j\p_k|,
\end{align}
where $\La^{j,k} \equiv |j\>\<k|+|k\> \<j|$ $(1\le j<k\le d)$. We can consider $\La^{j,k}$ as the symmetric generators of SU($d$) group (GGM, the generalized Gell-Mann matrices).  
For a mixed state $\r$, the coherence concurrence $C_c(\r)$ is defined with convex roof construction. In general $C_c$ is not smaller than $C_{l_1}$ ($l_1$-norm coherence monotone), but there exists a necessary and sufficient condition for the two quantities to be equal to each other \cite{Chin2}.

Then the $2$-concurrence entanglement monotone created from $\r^s$ is bounded by $C_c(\r^s)$:\\
(\emph{Theorem 2} in \cite{Qi}) The amount of $2$-concurrence entanglement monotone created from $\r^s$ (a state in the system $S$ of the dimension $d$) by adding an incoherent state $|1\>\<1|^A$ in an ancilla system $A$ and taking an incoherent operation $\La^{SA}$, is bounded above by the coherence concurrence of $\r^s$ as follows \footnote{Note that the factor $\sqrt{\frac{d}{2(d-1)}}$ in front of $C_c$ is by our different normalization from that of \cite{Qi} }:
\begin{align}
\label{C_2}
C_{2}(\La^{SA} [ \r^s\otimes |1\> \<1|^A]) \quad \le \quad \sqrt{\frac{d}{2(d-1)}}C_c(\r^s).
\end{align}

A similar inequality holds for $k=3$ $(d\ge 3)$ case, e.g.,
\begin{align}
\label{Cc3}
C_3(\La^{SA} [ \r^s\otimes |1\> \<1|^A]) \le \Big(\frac{3d^2}{4(d-1)(d-2)}\Big)^\frac{1}{3}C_c(|\p\>),
\end{align}
using the formula \eqref{c3}. The detailed proof is in Appendix \ref{cc3}.

But there is a simpler way to obtain a complete inequality relation of $k$-concurrence monotones that has the upper bound in terms of $C_c(\r)$ from \eqref{C_2} and the following inequality,\\
\begin{align}\label{cf}
C_2(\r) \ge C_3(\r) \ge \cdots \ge C_{d-1}(\r) \ge C_d(\r) \equiv G_d(\r)
\end{align}
for any mixed bipartite state $\r$, which is a direct result of Maclaurin's inequality and convex roof extention.
\begin{theorem} The members of the $k$-concurrence monotone family created from any mixed state $\r^s$ via an incoherent operation $\La^{SA}$ is bounded above by $C_c(\r^s)$ and ordered as follows:
\begin{align}
\label{gd}
 G_{d}(\La^{SA} [ \r^s\otimes |1\> \<1|^A]) &\le C_{d-1}(\La^{SA} [ \r^s\otimes |1\> \<1|^A]) \nn \\ 
& \le \cdots \le C_{2}(\La^{SA} [ \r^s\otimes |1\> \<1|^A]) \nn \\
&\le \sqrt{\frac{d}{2(d-1)}}C_c(\r^s). 
\end{align}
\end{theorem}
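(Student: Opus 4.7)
The plan is to assemble the statement from two ingredients already in hand: inequality \eqref{C_2} (Qi's bound on $C_2$ after an incoherent operation) and the Maclaurin-type ordering \eqref{cf} of the $k$-concurrences. Given these, the theorem is almost a one-line corollary, so the real content is justifying \eqref{cf} for mixed states.

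First I would verify \eqref{cf} on pure states. Given Schmidt coefficients $\la=(\la_1,\ldots,\la_d)$ with $\sum_i\la_i=1$, Maclaurin's inequality gives
\begin{align}
\Bigl(\tfrac{S_k(\la)}{\binom{d}{k}}\Bigr)^{1/k}\ \ge\ \Bigl(\tfrac{S_{k+1}(\la)}{\binom{d}{k+1}}\Bigr)^{1/(k+1)}\qquad(1\le k\le d-1).
\end{align}
Substituting into the definition \eqref{ckp} and using $S_k(1/d,\ldots,1/d)=\binom{d}{k}/d^k$ (so that a common factor of $d$ appears on both sides) yields $C_k(|\p\>)\ge C_{k+1}(|\p\>)$.

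Next I would lift the pointwise ordering to mixed states through the convex roof. Let $\{p_i^{\ast},|\p_i^{\ast}\>\}$ be an optimal decomposition realizing $C_k(\r)$. Then
\begin{align}
C_k(\r)\ =\ \sum_i p_i^{\ast}\,C_k(|\p_i^{\ast}\>)\ \ge\ \sum_i p_i^{\ast}\,C_{k+1}(|\p_i^{\ast}\>)\ \ge\ C_{k+1}(\r),
\end{align}
where the last inequality is just the fact that $C_{k+1}(\r)$ is the infimum of $\sum_i p_i C_{k+1}(|\p_i\>)$ over all pure-state decompositions. Iterating from $k=2$ down to $k=d$ gives the full chain $C_2(\r)\ge C_3(\r)\ge\cdots\ge G_d(\r)$ for every mixed bipartite $\r$.

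Finally, I would apply this chain to the specific output state $\La^{SA}[\r^s\otimes|1\>\<1|^A]$, which is itself a mixed bipartite state on $\CH_d\otimes\CH_d$; this produces the left-hand chain $G_d\le C_{d-1}\le\cdots\le C_2$ in \eqref{gd}. For the rightmost step, bounding $C_2(\La^{SA}[\r^s\otimes|1\>\<1|^A])$ by $\sqrt{d/(2(d-1))}\,C_c(\r^s)$, I would simply invoke \eqref{C_2} (Theorem~2 of \cite{Qi}). The only delicate point, and what I expect to be the main (and rather minor) obstacle, is checking that the normalization prefactor $\sqrt{d/(2(d-1))}$ matches our conventions for $C_2$ and $C_c$ as used in \eqref{ckp}; once that is confirmed, the assembly of \eqref{gd} is immediate.
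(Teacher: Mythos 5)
Your proposal is correct and follows essentially the same route as the paper, which also obtains the theorem by combining the Maclaurin-inequality ordering \eqref{cf} (lifted to mixed states by convex roof) with the $C_2$ bound \eqref{C_2}; you simply spell out the Maclaurin step and the convex-roof lifting that the paper leaves implicit.
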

\begin{proof}
 This is a straightforward result of \eqref{C_2}, \eqref{gd}, and the inequality $\sqrt{\frac{d}{2(d-1)}} < \Big(\frac{3d^2}{4(d-1)(d-2)}\Big)^\frac{1}{3}$ for $d\ge 3$. 	
\end{proof}
\begin{corollary}\label{cor1}
If there exists an incoherent operation that converts a state $\r^s$ to a state of nonzero $k$-concurrence for any $k$, $C_c(\r^s)$ is nonzero.
\end{corollary}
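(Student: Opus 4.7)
The plan is to derive the corollary as an immediate consequence of the inequality chain in the preceding theorem; no new machinery is required.

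First, I would reformulate the hypothesis: suppose there exist some $k$ with $2 \le k \le d$ and some incoherent operation $\La^{SA}$ such that $C_k(\La^{SA}[\r^s \otimes |1\>\<1|^A]) > 0$. By the ordering $C_2(\s) \ge C_3(\s) \ge \cdots \ge G_d(\s)$ of the generalized concurrence family on any mixed bipartite state $\s$ (the Maclaurin-plus-convex-roof consequence already invoked in the proof of the preceding theorem), this forces $C_2(\La^{SA}[\r^s \otimes |1\>\<1|^A]) > 0$ as well, since $C_2$ sits at the top of the chain.

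Second, I would invoke the upper bound from the preceding theorem, namely $C_2(\La^{SA}[\r^s \otimes |1\>\<1|^A]) \le \sqrt{\frac{d}{2(d-1)}}\, C_c(\r^s)$, which is where the coherence concurrence enters. Combined with the previous step, this forces $C_c(\r^s) > 0$, as claimed. The cleanest way to write this up is by contrapositive: if $C_c(\r^s) = 0$, the above bound makes $C_2(\La^{SA}[\r^s \otimes |1\>\<1|^A]) = 0$ for every incoherent $\La^{SA}$, and then the ordering collapses every $C_k$ of the output state to zero as well, contradicting the hypothesis.

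There is no substantive obstacle: the corollary is essentially a bookkeeping consequence of the chain that has just been established. All of the real work was absorbed into deriving the $C_2$ bound (which in turn rests on Theorem 2 of \cite{Qi}) and in verifying the ordering of the concurrence family via Maclaurin's inequality; the corollary simply reads off the implication at the weakest link ($C_c$) of that chain.
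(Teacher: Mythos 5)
Your proposal is correct and is exactly the argument the paper intends: the corollary is stated without separate proof precisely because it reads off from the chain $G_d \le \cdots \le C_2 \le \sqrt{\tfrac{d}{2(d-1)}}\,C_c(\r^s)$ established in the preceding theorem, which is what you do. Your contrapositive phrasing is a clean way to write it up and introduces nothing beyond what the paper already relies on.
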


\subsection*{The conversion of coherence into $k$-concurrence }
The generalized concurrence is a family of hierarchical entanglement monotones closely related to the Schmidt number of the state, so we can guess the convertibility for each $k$-concurrence ($2 \le k \le d$) will be discernable with some hierarchical coherence monotone. And we presume that the coherence number is such a quantity.

%For the case of $2$-concurrence, we can state that

%\emph{There exists an incoherent operation that converts a state $\r^s$ to a state of nonzero $2$-concurrence if and only if $C_c(\r^s)$ is nonzero}

%following \cite{Qi}, by using a unitary incoherent operation on the bipartite system \cite{strel, stein}
%\begin{align}
%\label{u}
%U\equiv \sum_{i=1}^{d}\sum_{j=i}^{d}|i\>\<i|^S\otimes |i\oplus (j-1)\>\<j|^A,
%\end{align}
%where $\oplus$ means an addition modulo $d$. For bipartite qubits this is the CNOT gate. In general
%\begin{align}
%\La^{SA}_u\Big[ \r^s\otimes |1\>\<1|^A\Big] & \equiv U\Big[ \r^s\otimes |1\>\<1|^A\Big]U^\dagger \nn \\
%& = \sum_{i,j}\r_{ij}|i\>\<j|^S\otimes |i\>\<j|^A
%\end{align}  
%and $|\p\>^S=\sum_{i=1}^{d}\p_i|i\>$ goes to $|\p\>^{SA}=\sum_{i=1}^{d}\p_i|ii\>$ under $\La^{SA}_u$.
%Then we have a lower bound for  $C_2(\La^{SA}_u [ \r^s\otimes |1\> \<1|^A])$ \cite{Qi} as follows:
%\begin{align}
%\label{uni2}
%\frac{1}{(d-1)}C_c(\r^s) \quad \le \quad C_2(\La^{SA}_u [ \r^s\otimes |1\> \<1|^A]).
%\end{align}
%Combining \eqref{uni2} and Corollary \ref{cor1}, we obtain the above statement.

We can obtain the convertibility relation between the coherence number and $k$-concurrence entanglement of a state by imposing a constraint on the coherence number of the state through the following lemma:
\begin{lemma}\label{cnsn}
The Schmidt rank generated from a pure state in the system $S$ through any Kraus operator of incoherent operations by appending an incoherent state
$|1\>^A$ in an ancilla system $A$ is not bigger than the coherence rank of the initial pure state, i.e.,
\begin{align}
r(K_n[|\p\>^S\otimes |1\>^A]) \le r_C(|\p\>^S).
\end{align}
\end{lemma}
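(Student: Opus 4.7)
The plan is to unpack the definition of an incoherent Kraus operator and observe that it sends each product computational basis state to (at most) one product computational basis state, so that the image of $|\psi\rangle^S\otimes|1\rangle^A$ is a superposition whose number of terms is controlled by the coherence rank of $|\psi\rangle^S$.

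First I would write $|\psi\rangle^S$ in its minimal computational‐basis expansion
\begin{align}
|\psi\rangle^S = \sum_{j=1}^{r_C(|\psi\rangle^S)}\psi_j|c_j\rangle^S,
\end{align}
with all $\psi_j\neq 0$ and $\{|c_j\rangle^S\}$ a subset of the computational basis of $\mathcal{H}_S$. Then $|\psi\rangle^S\otimes|1\rangle^A = \sum_j \psi_j |c_j\rangle^S|1\rangle^A$, and the product basis $\{|c_j\rangle^S|1\rangle^A\}$ is a subset of the computational basis of the composite system $S\otimes A$.

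Next I would invoke the defining property of an incoherent Kraus operator: in the computational basis, $K_n$ has at most one nonzero entry per column. Equivalently, $K_n$ acting on any computational basis vector $|i\rangle^{SA}$ of $S\otimes A$ yields $\kappa_i^{(n)}|f_n(i)\rangle^{SA}$ for some scalar $\kappa_i^{(n)}$ and some basis vector $|f_n(i)\rangle^{SA}$ (the output is zero if $\kappa_i^{(n)}=0$). Applying this to each term in the sum,
\begin{align}
K_n\bigl[|\psi\rangle^S\otimes|1\rangle^A\bigr] = \sum_{j=1}^{r_C(|\psi\rangle^S)} \psi_j\,\kappa_j^{(n)}\,|s_j^{(n)}\rangle^S\otimes|a_j^{(n)}\rangle^A,
\end{align}
where each surviving term is a product of two computational basis vectors. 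Hence the resulting state admits an expansion as a sum of at most $r_C(|\psi\rangle^S)$ product states drawn from the computational basis of $S\otimes A$.

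Finally I would conclude by observing that any state expressible as a sum of $r$ product basis vectors has Schmidt rank at most $r$: collecting terms that share the same factor on $S$ (or on $A$) can only reduce the effective number of summands, and the resulting expression is already in a bi‑orthogonal product form up to merging repeats, so it provides an explicit Schmidt‑type decomposition of rank $\le r_C(|\psi\rangle^S)$. I do not expect a serious obstacle; the only point requiring care is the possibility that several of the pairs $(s_j^{(n)},a_j^{(n)})$ coincide or that some $\kappa_j^{(n)}=0$, but both effects can only decrease the Schmidt rank, so the bound $r\bigl(K_n[|\psi\rangle^S\otimes|1\rangle^A]\bigr)\le r_C(|\psi\rangle^S)$ is preserved.
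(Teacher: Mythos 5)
Your proposal is correct and follows essentially the same route as the paper: both arguments show that $K_n[|\psi\rangle^S\otimes|1\rangle^A]$ is a superposition of at most $r_C(|\psi\rangle^S)$ computational product basis states and then note that grouping terms by the $S$-factor yields a decomposition of Schmidt rank at most that number. The only cosmetic difference is that you derive the term-count bound directly from the one-nonzero-entry-per-column structure of the incoherent Kraus operator, whereas the paper gets it by invoking the known monotonicity of the coherence rank under incoherent operations; the substance is identical.
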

\begin{proof}
Let's say that a pure state $|\p\>^S$ in $S$ has a coherence rank $l$. Then with the Kraus operator set $\{K_n\}$ of any incoherent operation $\La^{SA}$ acting on $S$ and $A$, we have
\begin{align}
l= r_C(|\p\>^S) =r_C(|\p\>^S\otimes|1\>^A) \ge r_C(K_n[|\p\>^S\otimes |1\>^A]) 
\end{align}
for all $n$.
So $K_n[|\p\>^S\otimes |1\>^A]$ can be rewritten as
\begin{align}
K_n[|\p\>^S\otimes |1\>^A]=  \sum_{i=1}^{q\le  l} |i\>\otimes(\sum_j (\p_n)^{ij}|j\>) \equiv \sum_i|i\>\otimes|\tilde{i}\>
\end{align}
and the Schmidt rank of $K_n[|\p\>^S\otimes |1\>^A]] $ is not bigger than $l$.
\end{proof}
Now we are ready to present the convertibility theorem between coherence and the $k$-concurrence of general states.
\begin{theorem}
A mixed state $\r_s$ can be converted to a state of nonzero $k$-concurrence via an incoherent operation by appending an incoherent state
$|1\>\<1|^A$ in an ancilla system $A$ if and only if $r_C(\r^s) \ge k$, i.e.,
\begin{align}
^\exists \La^{SA}: \quad  C_k(\La^{SA}[\r^s\otimes |1\>\<1|^A]) \neq 0 \quad \iff\quad  r_C(\r^s) \ge k.
\end{align}
\end{theorem}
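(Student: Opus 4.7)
My plan is to prove the biconditional by handling the two implications separately: the necessity ($\Rightarrow$) follows from Lemma 1 via contrapositive, while sufficiency ($\Leftarrow$) requires constructing an explicit incoherent operation.

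For necessity, I would assume $r_C(\r^s) \le k-1$ and aim to show that $C_k(\La^{SA}[\r^s \otimes |1\>\<1|^A]) = 0$ for every incoherent $\La^{SA}$. By Definition 1 there is a pure-state decomposition $\r^s = \sum_a p_a |\p_a\>\<\p_a|$ with $r_C(|\p_a\>) \le k-1$ for every $a$. Under any set of incoherent Kraus operators $\{K_n\}$, the output is a convex mixture of the (sub-normalized) pure states $K_n(|\p_a\>^S \otimes |1\>^A)$, each of which, by Lemma 1, has Schmidt rank at most $r_C(|\p_a\>) \le k-1$. Since $C_k$ vanishes on pure states of Schmidt rank strictly less than $k$, the convex-roof definition of $C_k$ on mixed states gives $C_k$ of the output equal to zero, contradicting the hypothesis.

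For sufficiency, I would exhibit a single incoherent operation that does the job. A natural choice is the generalized CNOT unitary $U$ defined by $U|i\>^S|j\>^A = |i\>^S|(i+j) \bmod d\>^A$, which permutes the computational basis and is therefore incoherent. For any pure state $|\p\>^S = \sum_i \p_i|i\>$ one has $U(|\p\>^S \otimes |1\>^A) = \sum_i \p_i|i\>^S|i+1\>^A$, already in Schmidt-decomposed form with Schmidt rank equal to $r_C(|\p\>)$. I then claim the Schmidt number of $\sigma \equiv U(\r^s \otimes |1\>\<1|^A)U^\dagger$ equals $r_C(\r^s)$, so in particular it is at least $k$, which forces $C_k(\sigma) \neq 0$ by the Schmidt-number-specific property of the concurrence family. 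The inequality Schmidt number $\le r_C(\r^s)$ follows by applying $U$ termwise to the optimal decomposition of $\r^s$. For the reverse inequality, any decomposition $\sigma = \sum_a p_a |\phi_a\>\<\phi_a|$ must have each $|\phi_a\>$ in the range of $U$ restricted to $\mathcal{H}^S \otimes \mathrm{span}\{|1\>^A\}$, so $U^\dagger|\phi_a\> = |\tilde{\p}_a\>^S \otimes |1\>^A$, and the coherence rank of $|\tilde{\p}_a\>$ equals the Schmidt rank of $|\phi_a\>$; tracing out $A$ then yields a decomposition of $\r^s$ whose maximum coherence rank is precisely the maximum Schmidt rank of the $|\phi_a\>$, which must be at least $r_C(\r^s) \ge k$.

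The main obstacle I anticipate is the support argument in sufficiency: one has to verify rigorously that every pure state appearing in a decomposition of $\sigma$ pulls back through $U^\dagger$ to a product with $|1\>^A$, so that coherence and Schmidt ranks are exchanged consistently. A minor accompanying point is the converse to the remark in Sec.~\ref{revisit} that $C_k = 0$ whenever $k$ exceeds the Schmidt number: namely, Schmidt number $\ge k$ implies $C_k > 0$, which follows from the convex roof because any minimizing decomposition must contain at least one pure state of Schmidt rank $\ge k$ carrying positive weight. The necessity direction is comparatively routine once Lemma 1 is invoked.
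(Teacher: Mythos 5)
Your proposal is correct and takes essentially the same route as the paper: the necessity direction is exactly the paper's contrapositive argument (optimal decomposition of $\r^s$, Lemma~\ref{cnsn} applied to each $K_n(|\p_a\>^S\otimes|1\>^A)$, then convexity of the convex-roof $C_k$), and the sufficiency direction uses the same generalized-CNOT incoherent unitary that the paper displays in \eqref{u}. The only difference is one of detail: you spell out the support argument showing that every pure state in a decomposition of $U(\r^s\otimes|1\>\<1|^A)U^\dagger$ pulls back to a product with $|1\>^A$, and that Schmidt number $\ge k$ forces $C_k>0$, where the paper simply defers to Theorem~1 of Killoran \emph{et al.}; this makes your write-up more self-contained but does not constitute a different approach.
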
 
\begin{proof}
$\Longrightarrow$: If $r_C(\r^s) < k$, there exists a decomposition of $\r^s$ as $\r^s=\sum_ap_a|\p_a\>\<\p_a|$ such that the maximal coherence rank of $\{|\p_a\>\}$ is smaller than $k$. Then by Lemma \ref{cnsn}, the Schmidt number of $\La^{SA}[|\p_a\>\otimes |1\>^A]$ is smaller than $k$. So we have
\begin{align}
C_k(\La^{SA}[|\p_a\>^S\otimes |1\>^A])=0,\quad \forall a.
\end{align}
Hence, 
\begin{align}
0 & \le C_k(\La^{SA}[\r^s\otimes |1\>\<1|^A]) \nn \\
& \le \sum_ap_aC_k(\La^{SA}[|\p_a\>^S\otimes |1\>^A])=0.
\end{align}
gives $C_k(\La^{SA}[\r^s\otimes |1\>\<1|^A])=0$\\
$\Longleftarrow$:
If $r_C(\r^s) \ge k$, then there exists an incoherent operation $\La^{SA}$  under which the coherence number of initial states are equal to the Schmidt number of final states (which is clear from Theorem 1 of \cite{killoran}). So there exists an incoherent operation $\La^{SA}$ such that $C_k(\La^{SA}[\r^s\otimes|1\>\<1|]) \neq 0$.
\end{proof}
An unitary operation under which the coherence number and the Schmidt number are equal is given by
\begin{align}
\label{u}
U\equiv \sum_{i=1}^{d}\sum_{j=i}^{d}|i\>\<i|^S\otimes |i\oplus (j-1)\>\<j|^A,
\end{align}
where $\oplus$ means an addition modulo $d$. Then we have with $|\p\>^S = \sum_i \p^i|i\>^S$
\begin{align}
U[ |\p\>^S\otimes |1\>^A] = \sum_{i}\p^i|ii\>^{SA}.
\end{align}

%\begin{corollary}
%The possible largest set of nonzero $C_k(\La^{SA}[\r^s\otimes|1\>\<1|^A])$ for a state $\r^s$ in $R_l$ is $\{C_2(\r^s), C_3(\r^s), \cdots, C_l(\r^s)  \}$.
%\end{corollary}
Defining an unitary incoherence operation as 
\begin{align}
\La^{SA}_u\Big[ \r^s\otimes |1\>\<1|^A\Big] & \equiv U\Big[ \r^s\otimes |1\>\<1|^A\Big]U^\dagger, 
\end{align}
we can obtain the bounds of $G$-concurrence ($C_d$) with coherence as follows:
\begin{theorem}\label{gcon}
	 When $r_C(\r^s)=d$ for a mixed state $\r^s$ and the unitary incoherent operation is given as $U$ of \eqref{u}, $G_d(\La^{SA}_u [ \r^s\otimes |1\> \<1|^A])$ has the upper and lower bound as follows:
\begin{align}
\label{Gdineq}
\frac{C_c(\r^s)}{S(\eps)(d-1)} \le G_d(\La^{SA}_u [ \r^s\otimes |1\> \<1|^A]) \le \frac{C_c(\r^s)}{(d-1)},
\end{align}
where
\begin{align}
&\r=\sum_ap_a|\p_a\>\<\p_a|, \quad |\p_a\> =\sum_i\p_a^i|i\>, \nn \\ & |\p_a^i| \ge \eps \quad \textrm{for any possible decomposition of }\r \nn
\end{align} and
\begin{align}
S(\eps)\equiv \frac{1}{e} \Bigg(\frac{(\eps^2 -1)\eps^{\frac{2\eps^2}{\eps^2 -1}}}{2\eps^2 \log\eps}\Bigg) \le \frac{1}{\eps^2}. \nn
\end{align}	
\end{theorem}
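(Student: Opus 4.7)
The idea is that the prescribed unitary $U$ of \eqref{u} sends any pure-state decomposition $\r^s=\sum_a p_a|\p_a\>\<\p_a|$ into the maximally correlated form
\[
U[|\p_a\>^S\otimes|1\>^A]=\sum_i\p_a^i|ii\>^{SA}\equiv|\tilde\p_a\>,
\]
whose Schmidt coefficients are simply $\{|\p_a^i|^2\}_i$. From the expression $G_d(|\p\>)=d(S_d(\la))^{1/d}$ derived in Section~\ref{revisit} one immediately obtains $G_d(|\tilde\p_a\>)=d\prod_i|\p_a^i|^{2/d}$, while by definition $C_c(|\p_a\>)=2\sum_{j<k}|\p_a^j\p_a^k|$. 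Because $U$ is unitary, pure-state decompositions of $\r^s$ and of $\sigma\equiv U(\r^s\otimes|1\>\<1|^A)U^\dagger$ are in bijection, so both convex roofs $G_d(\sigma)$ and $C_c(\r^s)$ can be optimized over the same index set. The plan is then to reduce the two inequalities in \eqref{Gdineq} to AM--GM (for the upper bound) and a sharp reverse AM--GM (for the lower bound) applied pointwise in $a$ to the $\binom{d}{2}$ positive numbers $\{|\p_a^j\p_a^k|\}_{j<k}$.

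For the upper bound, each index participates in exactly $d-1$ of those pair-products, so $\prod_{j<k}|\p_a^j\p_a^k|=\prod_i|\p_a^i|^{d-1}$ and the geometric mean of the $\binom{d}{2}$ numbers equals $\prod_i|\p_a^i|^{2/d}$. Classical AM--GM thus yields
\[
G_d(|\tilde\p_a\>)\le\frac{d}{\binom{d}{2}}\sum_{j<k}|\p_a^j\p_a^k|=\frac{C_c(|\p_a\>)}{d-1}.
\]
Averaging with the weights $p_a$ on a $C_c$-optimal decomposition of $\r^s$ and using the convex-roof upper bound $G_d(\sigma)\le\sum_a p_a G_d(|\tilde\p_a\>)$ delivers $G_d(\sigma)\le C_c(\r^s)/(d-1)$.

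For the lower bound, the hypothesis $|\p_a^i|\ge\eps$ uniformly over decompositions forces $|\p_a^j\p_a^k|\in[\eps^2,1]$, so Specht's ratio applies: for positive numbers in $[m,M]$ one has $\mathrm{AM}\le S(M/m)\,\mathrm{GM}$ with $S(h)=(h-1)h^{1/(h-1)}/(e\log h)$. Substituting $h=1/\eps^2$ and rearranging shows that Specht's ratio at this value reduces exactly to the function $S(\eps)$ of the statement, while the elementary bounds $h^{1/(h-1)}\le e$ and $\log h\ge(h-1)/h$ for $h>1$ combine to give $S(\eps)\le 1/\eps^2$. Specht's inequality therefore upgrades the previous AM--GM to
\[
C_c(|\p_a\>)\le(d-1)\,S(\eps)\,G_d(|\tilde\p_a\>).
\]
Averaging on any decomposition and invoking $\sum_a p_a C_c(|\p_a\>)\ge C_c(\r^s)$ (definition of the convex roof of $C_c$), then taking the minimum of the left-hand side over decompositions of $\sigma$, yields $G_d(\sigma)\ge C_c(\r^s)/((d-1)S(\eps))$.

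The main obstacle is the lower bound: it requires recognizing Specht's ratio as the sharp reverse-AM--GM tool for values confined to $[\eps^2,1]$, carrying out the somewhat tangled algebraic identification of $S(1/\eps^2)$ with the formula $S(\eps)$ stated in the theorem, and verifying that the uniform constraint $|\p_a^i|\ge\eps$ survives the infimum defining the convex roof of $G_d(\sigma)$. The upper bound, in contrast, is a one-line AM--GM and needs no boundedness hypothesis.
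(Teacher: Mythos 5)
Your proposal is correct and follows essentially the same route as the paper: rewrite $G_d$ of the maximally correlated state as $d$ times the geometric mean of the $\binom{d}{2}$ pair products $|\p_a^j\p_a^k|$ (using $\prod_{j<k}|\p_a^j\p_a^k|=\prod_i|\p_a^i|^{d-1}$), get the upper bound from AM--GM and the lower bound from Specht's reverse AM--GM ratio on $[\eps^2,1]$, then pass to mixed states by convex roof. You are in fact slightly more careful than the paper, which compresses the final step into ``by convex roof extension'' and cites Specht without the algebraic identification of $S(1/\eps^2)$ with the stated $S(\eps)$; your explicit tracking of the inequality directions in the two convex roofs is a worthwhile addition.
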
	
\begin{proof}
For a pure state $|\p\>^S$, $G$-concurrence and coherence concurrence in terms of $r_i$ ($\equiv |\p_i|$) are given by
\begin{align}
G_d(|\p\>^{SA})=d\Big( \prod_{i=1}^{d}  r_{i}^2 \Big)^{\frac{1}{d}}, \qquad 
C_c(|\p\>)=2\sum_{i<j}r_{i}r_{j}.
\end{align}
Then we have
\begin{align}
&\frac{1}{d(d-1)}C_c(|\p\>^S) - \frac{1}{d}G_d(|\p\>^{SA}) \nn \\
&=\frac{2}{d(d-1)}\sum_{i<j}r_ir_j -\Big(\prod_{i=1}^{d}r_i \Big)^{\frac{2}{d}} \nn \\
&=\frac{2}{d(d-1)}\sum_{i<j}r_ir_j -\Big(\prod_{i<j}r_ir_j \Big)^{\frac{2}{d(d-1)}} \nn \\
&\ge 0 ,
\end{align}
where the last inequality holds by the arithmetic-geometric mean inequality, and 
\begin{align}
& \frac{S(\eps)}{d}G_d(|\p\>^{SA})- \frac{1}{d(d-1)}C_c(|\p\>^S) \nn \\ &=S(\eps)\Big(\prod_{i<j}r_ir_j \Big)^{\frac{2}{d(d-1)}}  -\frac{2}{d(d-1)}\sum_{i<j}r_ir_j \nn \\
&\ge 0,
\end{align}
since $S(\eps)$ reverses  the arithmetic-geometric mean inequality \cite{Specht}. By convex roof extension we obtain \eqref{Gdineq}. 
\end{proof}
%It is interesting that the upper bound of  $G_d(\La^{SA}_u [ \r^s\otimes |1\> \<1|^A])$ is equal to the lower bound of  $C_2(\La^{SA}_u [ \r^s\otimes |1\> \<1|^A])$ which is derived in \cite{Qi}.

We can compare Theorem \ref{gcon} with the results in \cite{sentis}, where the lower bound of $G_d(\r)$ is given using nonlinear witness techniques. The inequality \eqref{Gdineq} also provides a lower bound for $G$-concurrence of a state, but the direction is different. The lower bound given in \cite{sentis} is expressed with density matrix elements, so determines whether a bipartite state has nonzero $G$-concurrence. For our case, we create the entangled state with nonzero $G$-concurrence with a coherent state $\r$ with $r_C(\r)=d$.

\section{Coherence number in the Grover search algorithm}\label{groversearch}
 In this section, we show that the coherence number is a convenient measure for detecting the moment that the  Grover search process \cite{grover1997} becomes completely successful, which provides the idea that there exist optimal coherence monotones in the generalized coherence concurrence family \cite{Chin2} which are completely exploited during the task.

\subsection*{Grover search algorithm and coherence}

Grover search algorithm is the most fundamental algorithm in quantum computation. It theoretically says that quantum operations with properly adjusted phases can speed up the searching process, i.e., finding $m$ targets among a large database $N$. It is conjectured that quantum correlations such as entanglement are the resources for the speedup, but the attempts to find some concrete relation between the success probability of Grover search process and various measures of entanglement or discord has been unsuccessful \cite{braun, cui2010}. 

But considering the recent viewpoint that quantum coherence is a more fundamental resource than entanglement or discord, it is worth attempting to investigate the quantitative relation between coherence and Grover search algorithm. Indeed,  coherence depletion phenomena in the Grover quantum search algorithm is analyzed by \cite{shi}, in which the authors showed that the relative entropy of coherence and $l_1$-norm coherence monotone decrease monotonically while the success probability of the searching process increases.

 Here we approach the problem with two coherence monotones. One is the coherence number and the other is the last member of the generalized coherence concurrence $C_c^{(N)}$ introduced in \cite{Chin2}, since they expose the critical moments of the searching process more vividly than the monotones analyzed in \cite{shi}.
 
First, we briefly review the Grover search algorithm \cite{grover1997}. Consider a system with $n$-qubits. Then the system has a database of dimension $N = 2^n$. We initialize the state of the qubits as $|\p_0\> =\frac{1}{\sqrt{N}}\sum_{K=1}^{N} |K\>$, which is achieved by taking local Hadamard gates $H^{\otimes n}$ ($H=\frac{1}{\sqrt{2}}(|0\>\<0| +|0\>\<1| +|1\>\<0| -|1\>\<1|)$ ) on the ground state $|0,0,\cdots, 0\>$. Then we repeatedly take an operation $G=(2|\p\>\<\p|-\mathbb{I})O$, where $O$ is called the oracle. When the state is among the targets, $O$ rotates the phase by $\pi$. And when the state is not, $O$ leaves the system unchanged. We can easily see that $G$ rotates the state by an angle $A=\cos^{-1}{\frac{N-2m}{N}}=2\tan^{-1}{\sqrt{\frac{m}{N-m}}}$.

Let's say that there are $m$ target states among the $N=2^n$ states, Then we reexpress the initial state as 
\begin{align}
|\p_0\> =\sqrt{\frac{m}{N}}|X\> +\sqrt{\frac{N-m}{N} } |X^{\perp}\>,
\end{align} 
where $|X\>$ (for targets) and $|X^{\perp}\>$ (for those which are not) are defined as
\begin{align}
|X\> =\frac{1}{\sqrt{m}}\sum_{i=1}^{m}|i\>, \qquad |X^{\perp}\> =\frac{1}{\sqrt{N-m}}\sum_{I=m+1}^{N} |I\>, 
\end{align}
without loss of generality.
After  taking $G^r$ on $|\p_0\>$, we have 
\begin{align}
\label{pr}
|\p_r\>=\sin\a_r|X\> + \cos\a_r|X^{\perp}\>,
\end{align}
where 
$\a_r=(r+\frac{1}{2})A$.
%\begin{align}
%\r (r) =& \frac{\sin^2\a}{m}\sum_{i,j}^{m}|i \>\< j| +\frac{\cos^2\a_r}{(N-m)} \sum_{I,J=m+1}^{M} |I\>\<J| \nn \\
%& +\frac{\sin\a_r\cos\a_r}{\sqrt{m(N-m)}}\sum_{i=1}^{m}\sum_{I=m+1}^{N}\Big( |i\>\<I| +|I\>\<i| \Big)
%\end{align}

Then the success probability for finding target states is
\begin{align}
P(r)= \sin^2\a_r.
\end{align}
 
The states after $r$ times of iteration gives a density matrix, and the authors of \cite{shi} calculated the relative entropy of coherence and $l_1$-norm coherence with it. They showed that during the success probability $P(r)$ increases from $0$ to $1$, the amounts of coherence decrease monotonically. These phenomena support the conjecture that coherence is a key resource for Grover search process. 

\subsection*{$r_C$ and $C_c^{(N)}$ as resources for Grover search}  

It is quite straightforward to see the change of coherence number of Eq. \eqref{pr} along $r$. Since the state is pure, the coherence number is just the coherence rank.  $r_C(|\p_r\>)$ remains constant until $r$ exactly satifies $\cos\a_r =0$, i.e.,
\begin{align}
\label{drop}
&r\neq \Big( \frac{\pi}{2A }-\frac{1}{2}\Big): \quad 0 \le P(r) < 1, \quad r_C(|\p_r\>)=N \nn \\
&r= \Big( \frac{\pi}{2A}-\frac{1}{2}\Big): \quad P(r)=1, \quad r_C(|\p_r\>)=m 
\end{align}
The coherence number of $|\p_r\>$ suddenly drops down to $m$ (the number of target states) from $N$ when $r$ reaches $\frac{\pi}{2A }-\frac{1}{2} $. So we can say that \emph{the leaping off of coherence number is an alarm bell to notice that $P(r)$ has reached its maximal value exactly}. But since it usually does not happen that $\frac{\pi}{2A}-\frac{1}{2}$ becomes an integer, we can say for most cases that $r_C(|\p_r\>)$ remains $N$ throughout the searching process.

One thing to pay attention is that the final state after finishing the searching task, even when $\frac{\pi}{2A }-\frac{1}{2} $ is an integer, is still coherent except when $m=1$. We can see the same pattern in Figure 2 of \cite{shi}, which shows that the relative entropy of coherence $C_r(|\p_r\>)$ is  still non-zero at $P=1$. The same is true with the $l_1$-norm monotone and the geometric coherence \cite{rast}. If there are coherence monotones which the iteration of $G$ depletes completely at $P=1$, we can say that they are the optimal measures of coherence consumption during the searching process.

As such a monotone, we introduce $C_c^{(N)}$, the last member of the \emph{generalized coherence concurrence} \cite{Chin2}. It is an analogous coherence monotone family to the generalized entanglement concurrence and  consists of coherence $k$-concurrences with $2\le k \le N$ ($N$ is the dimension of state here). The family is coherence number specific, just as the generalized entanglement concurrence is Schmidt number specific. So Eq. \eqref{drop} motivates us to consider $C_c^{N}(|\p_r\>)$ as an optimal measure, for $C_c^{(N)}(\r) \neq 0$ if and only if $r_C(\r) =N$.

  While the general definition for the whole members of the monotone family is given in \cite{Chin2}, here we just need the definition for $C_c^{(N)}$:
 \begin{definition}
 	For a pure state $|\p\>=\sum_{i=1}^{N}\p_i|i\>$ ($\{|i\>\}_{i=1}^{N}$ is the computational  basis set), 
 	 \begin{align}
 	C_c^{(N)}(|\p\>)\equiv N  \Big|\p_1^2\p_2^2\cdots \p_N^2 \Big|^{\frac{1}{N}}
 	\end{align}
  and $C_c^{(N)}(\r)$ for a mixed state $\r$ is obtained by convex roof extention.
 \end{definition}
 $C_c^{N}$ is a normalized monotone, i.e.,  $C_c^{(N)}(\r)=1$ when $\r$ is maximally coherent. It is clear that $C_c^{(N)}(\r) \neq 0$ if and only if $r_C(\r) =N$ from the form of the definition.

For our case the state is pure and $|\p\> =|\p_r\>$, so $C_c^{(N)}(|\p_r\>)$ is given by
\begin{align}\
\label{CcN}
C_c^{(N)}(|\p_r\>)=N \Big(\frac{\sin^2\a_r}{m}\Big)^\frac{m}{N} \Big(\frac{\cos^2\a_r}{N-m}\Big)^{\frac{N-m}{N}}.
\end{align}
We first check the values of $C_c^{(N)}(|\p_r\>)$ at $r=0$ and $r=\frac{\pi}{2A}-\frac{1}{2}$,
\begin{align}
\label{boundary}
&r=0: \quad \tan^2\a_0 =\frac{m}{N-m}, \quad C_c^{(N)}(\p_0\>)=1 \nn \\
&r=\Big(\frac{\pi}{2A}-\frac{1}{2}\Big): \quad \cos\a_r=0, \quad  C_c^{(N)}(|\p_r\>)=0 
\end{align}
$C_C^{(N)}(|\p_r\>)$ completely goes away when $P(r)=1$ as expected. We obtain the behavior of $C_c^{(N)}$ in the midway between  $r=0$ and $r=\frac{\pi}{2A}-\frac{1}{2}$ by differentiating $C_c^{(N)}$ with $r$,

\begin{align}
\frac{d C_c^{(N)}}{d r} = &2A\Big( \frac{\sin^{2m}\a_r \cos^{2(N-m)}\a_r}{m^m(N-m)^{(N-m)}}\Big)^{\frac{1}{N}}  \nn \\
& \times \frac{\Big(m\cos^2\a_r -(N-m)\sin^2\a_r\Big)}{\sin\a_r\cos\a_r} \le 0.
\end{align}
The last inequality comes from $\tan^2\a_r \ge \frac{m}{N-m}$. As a result, \emph{$C_c^{(N)}(|\p_r\>)$ is a monotonically decreasing function of $r$ from $1$ to $0$ and completely consumed to perform the Grover search process}. The case for $N=2^{10}$ and $m=5$ is ploted in Fig. 1.

\begin{figure}
	\label{graph}
	\centering
	\includegraphics[width=8cm]{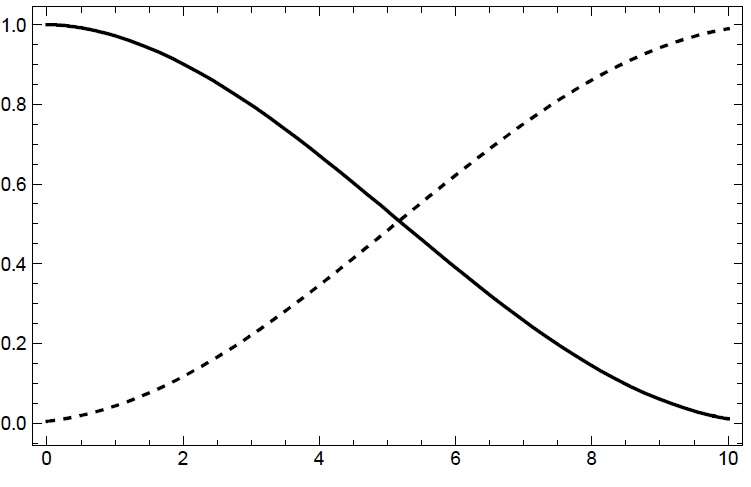}
	\caption{The change of $C_c^{(2^{10})}$ (solid line) and P(r) (dashed line) with $N=2^{10}$ and $m=5$ from $r=0$ to $r=10$.}
\end{figure}

We can also  calculate the cost performance $w=-dP/dC$ for $C=C_c^{(N)}(|\p_r\>)$. Actually,  Eq. \eqref{CcN} is re-expressed with $P$ as
\begin{align}
C_c^{(N)}(P)=N \Big(\frac{P}{m}\Big)^\frac{m}{N} \Big(\frac{1-P}{N-m}\Big)^{1-\frac{m}{N}},
\end{align} 
so we have
\begin{align}
w = \frac{m^\frac{m}{N}(N-m)^{1-\frac{m}{N}} (1-P)^{\frac{m}{N}} P^{1-\frac{m}{N}} }{N\Big( P- \frac{m}{N}\Big) } \ge 0
\end{align}
by $P =\sin^2\a_r \ge m/N$. The cost performance $w$ is very high when $r$ is small and  goes to 0 at $P=1$.
%This gives the upper and lower bound of $C_c^{(N)}(|\p_r\>) + P(r)$ as
%\begin{align}
%1 \le C_c^{(N)}(P) +P \le 2.
%\end{align}
%The upper bound is trivial, and the lower bound comes from the fact that $C_c^{(N)}$ is a decreasing concave function for $\frac{m}{N} \le P\le 1$ which gives $C_c^{(N)}(P) \ge 1-P$.  The case for $N=2^{10}$ and $m=5$ is ploted in Fig. 2.
When $N \gg 1$ and $m \ll N$, the above equation is simplified to a function of $P$ and $\frac{m}{N}$ as
\begin{align}
w \to \frac{ \Big(\frac{m}{N}\Big)^\frac{m}{N} (1-P)^\frac{m}{N} P^{1-\frac{m}{N}} }{P- \frac{m}{N}}.
\end{align}

Before closing this section, we roughly sketch the behavior of coherence $k$-concurrences $C_c^{(k)}(|\p_r\>)$ with $ m+1 \le k \le N-1$. All members $C_c^{(k)}(\r)$ in the generalized coherence concurrences are normalized and nonzero if and only if $r_C(\r) \ge k$ \cite{Chin2}. So their boundary conditions along $r$ including $k=N$ are expressed as
\begin{align}
&(\textrm{For all $k$'s such that } m+1 \le k \le N)\nn \\
&r=0: \quad \tan^2\a_0 =\frac{m}{N-m}, \quad C_c^{(k)}(\p_0\>)=1 \nn \\
&r=\Big(\frac{\pi}{2A}-\frac{1}{2}\Big): \quad \cos\a_r=0, \quad  C_c^{(k)}(|\p_r\>)=0 
\end{align}

  So we can say that coherence $k$-concurrences with $m+1 \le k \le N$ are completely consumed during the Grover search process.

\section{Conclusions}\label{fin}
\indent

In summary, we introduced the coherence number $r_C(\r)$ for mixed states and obtained a necessary and sufficient condition for a coherent mixed state to be converted to a bipartite entangled state of nonzero $k$-concurrence. We also showed that the coherence number is a simple and clear measure for the success probability  of the Grover search process and that the  continuous monotone $C_c^{(N)}$ is thoroughly exploited to finish the task.

Considering the relation between the Schmidt number and the $k$-concurrence in entanglement, it is natural to expect there exists a family of coherence concurrences which senses the coherence number directly, which is introduced in \cite{Chin2} (the coherence $k$-concurrence $C_c^{(k)}(\r)$ of a $d$-dimensional state $\r$ with $2\le k \le d$). In the paper, the application of $r_C(\r)$ and the concurrence family to the multi-slit interference experiments is also presented. But while the coherence number determines the number of distinguishable slits and $C_c^{(2)}$ can be understood as a kind of visibility, the quantitative meaning of $C_c^{(k)}$ with $k\neq 2$ in the multi-slit problem is not clear yet. Considering the role of the general coherence concurrence in Grover algorithm, the monotonicity of $C_c^{(k)}$ with $m+1\le k \le N-1$ during the searching process is also to be studied.

It will also be an intriguing problem to find a more systematic and geometric way of understanding the relations among the Schmidt number, the coherence number, and the generalized concurrences of entanglement and coherence.

\section*{Acknowledgements}
\indent

The author is grateful to Prof. Jung-Hoon Chun for his support during the research, and the anonymous
referee for advising on the improvement of the paper. This was supported by Basic Science Research Program through the National Research Foundation of Korea funded by the Ministry of Education(NRF-2016R1D1A1B04933413).

\appendix

\section{Axioms that coherence monotones should fulfill}\label{off} Coherence monotones should satisfy the following conditions \cite{baum}:\\
\\
(C1) Nonnegativity: $C(\r) \ge 0$

(stronger condition: $C(\r) =0$ if and only if $\r$ is incoherent)\\
(C2) Monotonicity: $C(\r)$ does not increase under the incoherent operations, i.e.,
$C(\La[\r]) \le C(\r)$ for any incoherent operation  $\La$, where $\La: \mathcal{B(H)} \mapsto \mathcal{B(H)}$ permits a set of Kraus operators $\{K_n \}$  such that $\sum_n K_n^\dagger K_n=\mathbb{I}$ and $K_n\d K_n^\dagger$ $\in \mathcal{I}$ for any $\d \in \mathcal{I}$ (the set of incoherent states, expressed as $\r=\sum_{i=1}^{d}p_i|i\>\<i|$).  \\
(C3) Strong monotonicity: $C$ does not increase under selective incoherent operations, i.e., $\sum_np_nC(\r_n) \le C(\r)$ with $p_n=tr[K_n\r K_n^\dagger]$, $\r_n=K_n\r K_n^\dagger/p_n$ for incoherent Kraus operators $K_n$.\\
(C4) Convexity: $\sum_ip_iC(\r_i) \ge C\Big(\sum_ip_i\r_i\Big)$.
\\

A quantity should fulfill at least (C1) and (C2) to be a coherent monotone, and if (C3) and (C4) are satified then (C2) is automatically satisfied. 

The incoherent Kraus operators are expressed more explicitly from the condition $K_n|j\> \sim |k\>$ ($|j\>$ and $|k\>$ are both in the computational basis set $\{|i\> \}^d_{i=1}$) for each $n$ as
\begin{align}
\label{kraus}
K_n = \sum_{i=1}^{d} c_n^i |s_i^n\>\< i|, 
\end{align}
where $s_i^n$ is a function that sends $i$ to a number from $1$ to $d$ so that $|s_i^n\>$ is in $\{|i\>\}_{i=1}^d$ and $\sum_{j=1}^{d} \<i|s_j^n\> =1$ \cite{winter}. 
Then the normalization condition for $K_n$ 
\begin{align}
\sum_nK_n^\dagger K_n = \sum_{i,j} \Big( \sum_n c_n^{i*}c_n^{j}\<s_i^n|s_j^n\> \Big)|i\>\<j| =\sum_i|i\>\<i| 
\end{align}
gives
\begin{align} 
\sum_n c_n^{i*}c_n^{j}\<s_i^n|s_j^n\> = \d_{ij} , \qquad \sum_n|c_n^i|^2 =1  \textrm{ for each } i .
\end{align}

%\section{Identities for the coherence concurrence}\label{id}
% For a $(d\times d)$-dimensional pure state $|\p\> = \sum_{ij}\p_{ij}|ij\>_{SA}$, we have
%\begin{align}
%C_c(|\p\>) &=\sum_{(i,j)\neq (k,l)}|\p_{ij}\p_{kl}^*| \nn \\
%&= \Big(\sum_{i,j}r_{ij}\Big)^2 -\sum_{i,j}r_{ij}^2  = \Big(\sum_{i,j}r_{ij}\Big)^2 -1 \qquad (r_{ij}\equiv |\p_{ij}|) \nn \\
%&=\sum_{i\neq k}\sum_{ j\neq l}r_{ij}r_{kl} +\sum_{i}\sum_{j\neq l}r_{ij}r_{il} +\sum_{i\neq k}\sum_{j}r_{ij}r_{kj} \nn \\
%&=2\Big(\sum_{i< k}\sum_{ j< l}r_{ij}r_{kl} +\sum_{i< k}\sum_{ j<l}r_{il}r_{kj} +\sum_{i}\sum_{j< l}r_{ij}r_{il} +\sum_{i< k}\sum_{j}r_{ij}r_{kj} \Big)\nn \\
%\end{align}
\section{The proof of \eqref{Cc3}}\label{cc3}
Since the inequality
\begin{align}
C_c(\r^S)=C_c(\r^s\otimes |1\>\<1|^A) \ge C_c(\La^{SA} [ \r^s\otimes |1\> \<1|^A])
\end{align}
is clear,  what we need to prove is
\begin{align}
& \Big(\frac{3d^2}{4(d-1)(d-2)}\Big)^\frac{1}{3} C_c(\La^{SA} [ \r^s\otimes |1\> \<1|^A])\nn \\
&\qquad  \ge C_3( C_c(\La^{SA} [ \r^s\otimes |1\> \<1|^A]) ).
\end{align}
Expending \eqref{c3} as
\begin{align}
C_3(|\p\>) = & \Bigg[\frac{3!d^2}{(d-1)(d-2)} \nn \\
& \times\sum_{\substack{i < k < m \\j<l<n}} \Big|\p_{ij}\p_{kl}\p_{mn} +\p_{il}\p_{kn}\p_{mj}  +\p_{in}\p_{kj}\p_{ml} \nn \\ 
& \qquad\quad -\p_{in}\p_{kl}\p_{mj} - \p_{il}\p_{kj}\p_{mn}  -\p_{ij}\p_{kn}\p_{ml}  \Big|^2 \Bigg]^{\frac{1}{3}},
\end{align}
we have
\begin{align}
& \frac{1}{d^3}\binom{d}{3}\Big(C_3(|\p\>)\Big)^3 \nn \\
& \le  \sum_{i<k<m}\sum_{j<l<n}  \Big(r_{ij}r_{kl}r_{mn} +r_{il}r_{kn}r_{mj}  +r_{in}r_{kj}r_{ml} \nn \\ 
&\qquad\qquad\qquad\quad +r_{in}r_{kl}r_{mj}+ r_{il}r_{kj}r_{mn}  +r_{ij}r_{kn}r_{ml}  \Big)^2,
\end{align}
where  $ r_{ij} \equiv |\p_{ij}|$. Then using 
\begin{align}
&\sum_{\substack{ i<k<m \\j<l<n}} r_{ij}^2r_{kl}^2r_{mn}^2 \nn \\
&\quad =\sum_{\substack{(i<k, j<l) \\(m<p, n<q)\\(r<v,s<w)}} r_{ij}r_{kl} r_{mn}r_{pq} r_{rs}r_{vw} \d_{km}\d_{ln}\d_{nv}\d_{qw}\d_{ir}\d_{js},
\end{align}
\begin{align}
& \sum_{\substack{i<k<m\\j<l<n}} r_{ij}r_{kl}r_{mn}r_{il}r_{kn}r_{mj}  \nn \\
& = \sum_{\substack{(i<k, j<l)\\(m<p, n<q)\\(r<v,s<w)}}  r_{ij}r_{kl} r_{mn}r_{pq} r_{rs}r_{vw}
\d_{ip}\d_{lq}\d_{kr}\d_{ns}\d_{mv}\d_{jw} 
\end{align}
and so on, we have 
\begin{align}
&\frac{2^3}{d^3}\binom{d}{3}\Big(C_3(|\p\>)\Big)^3 \nn \\
& \le 2^3\Big(\sum_{(i<k,j<l)}r_{ij}r_{kl} +\sum_{(i<k,j<l)}r_{il}r_{kj} \nn \\& \qquad\quad\qquad +\sum_{(i,j<l)}r_{ij}r_{il} +\sum_{(i<k,j)}r_{ij}r_{kj} \Big)^3 \nn \\
&=C_{c}(|\p\>)^3.
\end{align}
By convex roof extension, we have \eqref{Cc3}.

\bibliography{conum2}

 \end{document}